\definecolor{Green}{RGB}{25, 186, 0}
\newtheorem{theorem}{Proposition}
\newtheorem{corollary}{Corollary}[theorem]
\newcommand{\trace}[1]{\mathrm{tr}\!\left[#1\right]}
\newcommand{\alglabel}[1]{Alg.~\ref{#1}}
\newcommand{\applabel}[1]{App.~\ref{#1}}
\newcommand{\figlabel}[1]{Fig.~\ref{#1}}
\newcommand{\seclabel}[1]{Sec.~\ref{#1}}
\newcommand{\tablabel}[1]{Tab.~\ref{#1}}
\newcommand{\theolabel}[1]{Th.~\ref{#1}}
\newcommand{\thealgorithm}{\arabic\algocf@float}
\newcommand{\AlgoCaptionFormat}{}
\renewcommand{\algocf@makecaption@ruled}[2]{%
  \global\sbox\algocf@capbox{\hskip\AlCapHSkip%
    \setlength{\hsize}{\columnwidth}
    \addtolength{\hsize}{-2\AlCapHSkip}
    \vtop{\AlgoCaptionFormat\algocf@captiontext{#1}{#2}}}
}%
\def\bibsection{%
   \par
   \begingroup
    \baselineskip26\p@
    \bib@device{\hsize}{72\p@}%
   \endgroup
   \nobreak\@nobreaktrue
   \addvspace{19\p@}%
  }%
\begin{document}

\title{PauliComposer: Compute Tensor Products of Pauli Matrices Efficiently}
\author{Sebastián V. Romero\,\orcidlink{0000-0002-4675-4452}}
\email[]{sebastian.vidal@tecnalia.com}
\affiliation{TECNALIA, Basque Research and Technology Alliance (BRTA), 48160 Derio, Spain}
\affiliation{Department of Physical Chemistry, University of the Basque Country UPV/EHU, Apartado 644, 48080 Bilbao, Spain}
\author{Juan Santos-Suárez\,\orcidlink{0000-0001-9360-2411}}
\email[]{juansantos.suarez@usc.es}
\affiliation{Instituto Galego de Física de Altas Enerxías (IGFAE), Universidade de Santiago de Compostela, 15705 Santiago de Compostela, Spain}
\date{\today}

\begin{abstract}
    We introduce a simple algorithm that efficiently computes tensor products of Pauli matrices. This is done by tailoring the calculations to this specific case, which allows to avoid unnecessary calculations. The strength of this strategy is benchmarked against state-of-the-art techniques, showing a remarkable acceleration. As a side product, we provide an optimized method for one key calculus in quantum simulations: the Pauli basis decomposition of Hamiltonians.
\end{abstract}

\keywords{tensor product, Kronecker product, Pauli matrices, quantum mechanics, quantum computing.}

\maketitle

\section{Introduction}\label{sec:intro}

Pauli matrices~\cite{Pauli_1927} are one of the most important and well-known set of matrices within the field of quantum physics. They are particularly important both in physics and chemistry when used to describe Hamiltonians of many-body spin glasses~\cite{Heisenberg_1928, Bethe_1931, Sherrington_Kirkpatrick_75, Panchenko_2012, Hubbard_1963, altland_simons_2006} or for quantum simulations~\cite{Jordan_1928, BRAVYI2002210, Seeley_2012, Tranter_2015, doi:10.1021/acs.jctc.8b00450, Steudtner_2018}. The vast majority of these systems are out of analytic control so that they are usually simulated through exact diagonalization which requires their Hamiltonians to be written in its matrix form. While this task may be regarded as a trivial matter in a mathematical sense, it involves the calculation of an exponentially growing number of operations. Furthermore, description of quantum systems via Matrix Product States (MPS)~\cite{ostlund1995thermodynamic}, Density Matrix Renormalization Group (DMRG)~\cite{white1992densitymatrix} and Projected Entangled Pair States (PEPS)~\cite{verstraete2004renormalization} also involve large scale Hamiltonians, as well as Lanczos method~\cite{lanczos1950iterationmethod}, whose formulation has been efficiently encoded on quantum hardware recently~\cite{kirby2023exactefficient}.

In this work, we present the \emph{PauliComposer} (PC) algorithm which significantly expedites this calculation. It exploits the fact that any Pauli word only has one element different from zero per row and column, so a number of calculations can be avoided. Additionally, each matrix entry can be computed without performing any multiplications. Even though the exponential scaling of the Hilbert space cannot be avoided, PC can boost inner calculations where several tensor products involving Pauli matrices appear. In particular, those that appear while building Hamiltonians as weighted sums of Pauli strings or decomposing an operator in the Pauli basis.

The PC algorithm could be implemented in computational frameworks in which this sort of operations are crucial, such as the Python modules Qiskit~\cite{qiskit}, PennyLane~\cite{pennylane}, OpenFermion~\cite{openfermion} and Cirq~\cite{cirq}. It can also potentially be used in many other applications, such as the Pauli basis decomposition of the Fock space~\cite{entropy} and conventional computation of Ising model Hamiltonians to solve optimization problems~\cite{Lucas_2014, VRP_TSP, BPP, BPP2}, among others.

The rest of the article is organized as follows: in \seclabel{sec:desc} we describe the algorithm formulation in depth, showing a pseudocode-written routine for its computation. In \seclabel{sec:bench}, a set of tests is performed to show that a remarkable speed-up can be achieved when compared to state-of-the-art techniques. In \seclabel{sec:uses}, we show how this PC algorithm can be used to solve relevant problems. Finally, the conclusions drawn from the presented results are given in \seclabel{sec:conclusions}. We provide proofs for several statements and details of the algorithm in the appendices.

\section{Algorithm formulation}\label{sec:desc}

In this section we discuss the PC algorithm formulation in detail. Pauli matrices are hermitian, involutory and unitary matrices that together with the identity form the set $\sigma_{\{0, 1, 2, 3\}} = {\{I,X, Y, Z\}}$. Given an input string $x = x_{n-1}\dots x_0\in\{0,1,2,3\}^n$, the PC algorithm constructs
\begin{equation}\label{eq:composition}
P(x) \coloneqq \sigma_{x_{n-1}}\otimes\sigma_{x_{n-2}}\otimes\dots\otimes\sigma_{x_0}.
\end{equation}

Let us denote its matrix elements as $P_{j,k}(x)$ with $j, k=0,\dots,2^n-1$. It is important to remark that for each row $j$, there will be a single column $k(j)$ such that $P_{j, k(j)} \neq 0$ (see~\applabel{sec:zeros}).
The solution amounts to a map from the initial Pauli string to the positions and values of the $2^n$ nonzero elements. This calculation will be done sequentially, hence the complexity of the algorithm will be bounded from below by this number.

As a first step, it is worth noting that Pauli string matrices are either real (all elements are $\pm1$) or purely imaginary (all are $\pm i$). This depends on $n_Y$, the number of $Y$ operators in $P(x)$. We can redefine $\tilde{Y}\coloneqq iY$, so that $\smash[t]{\tilde{\sigma}_{\{0,1,2,3\}}=\{I, X, \tilde{Y}, Z\}}$ and $\smash[t]{\tilde{P}(x)\coloneqq \tilde{\sigma}_{x_{n-1}}\otimes\dots\otimes\tilde{\sigma}_{x_0}}$. As a result, every entry in $\smash[t]{\tilde{P}(x)}$ will be $\pm 1$. This implies that there is no need to compute any multiplication: the problem reduces to locating the nonzero entries in $\smash[t]{\tilde{P}(x)}$ and tracking sign changes. The original $P(x)$ can be recovered as $\smash[t]{P(x)=(-i)^{n_Y\text{ mod }4}\tilde{P}(x)}$.

We will now present an iterative procedure to compute $\tilde{P}$ by finding for each row $j$ the nonzero column number $k(j)$ and its corresponding value $\smash[t]{\tilde{P}_{j,k(j)}}$. For the first row, $j=0$, the nonzero element $\smash[t]{\tilde{P}_{0,k(0)}}$, can be found at
\begin{equation}\label{eq:first_column}
    k(0)= [y(x_{n-1})\dots y(x_{0})]_{10},
\end{equation}
where $[\,\cdot\,]_{10}$ is the decimal representation of a bit string and $y(x_i)$ tracks the diagonality of $\sigma_{x_i}$, where $y(x_i)$ is equal to $0$ if $x_i=\{0,3\}$ (thus $\sigma_{x_i}\in\{I,Z\}$) and $1$ otherwise (thus $\sigma_{x_i}\in\{X,Y\}$). The value of this entry is
\begin{equation}\label{eq:first_entry}
    \tilde{P}_{0, k(0)}=+1 \implies P_{0, k(0)}= (-i)^{n_Y\text{ mod }4}.
\end{equation}

The following entries can be computed iteratively. At the end of stage $l$, with $l=0, \cdots, n-1$, all nonzero elements in the first $2^{l+1}$ rows of $\smash[t]{P_{j,k(j)}}$ will have been computed using the information given by the substring $x_l \dots x_0$. At the next step, $l+1$, the following $2^l$ rows are filled using the ones that had already been computed, where the row-column relation $k(j)$ is given by
\begin{equation}\label{eq:columns}
    k(j+2^l)=k(j) + (-1)^{y(x_l)} 2^l, \quad j=0, \dots, 2^l-1.
\end{equation}
The second term of the RHS of this relation takes into account the way that the blocks of zeros returned at stage $l$ affect the new relative location of the nonzero blocks within the new $2^{l+1}\times2^{l+1}$ subcomposition. Its corresponding values are obtained from the previous ones, up to a possible change of sign given by
\begin{equation}\label{eq:entries}
    P_{j+2^l, k(j+2^l)} = \epsilon_l P_{j, k(j)},
\end{equation}
with $\epsilon_l$ equal to $1$ if $x_l\in\{0,1\}$ and $-1$ otherwise. This $\epsilon_l$ is nothing but a parameter that takes into account if $\sigma_{x_l}$ introduces a sign flip. In~\alglabel{alg:code} a pseudocode that summarizes the presented algorithm using~\eqref{eq:first_column}-\eqref{eq:entries}, is shown.

For the particular case of diagonal Pauli strings (only $I$ and $Z$ matrices), there is no need to compute the row-column relation $k(j)$, just the sign assignment is enough. Even if this is also the case for anti-diagonal matrices, we focus on the diagonal case due to its relevance in combinatorial problems~\cite{Lucas_2014, VRP_TSP, BPP, BPP2}. See~\alglabel{alg:code_diag} for the pseudocode of this case (\texttt{PDC} stands for \emph{PauliDiagonalComposer}).

\begin{figure}[!t]
 \begin{minipage}[t]{\columnwidth}
  \begin{algorithm}[H]
\DontPrintSemicolon
\SetAlgoNoLine
\SetKwInOut{Input}{input}
\SetKwInOut{Output}{output}
\SetKwFunction{Len}{len}
\SetKwFunction{Range}{range}
\Input{$x_{n-1}x_{n-2}\dots x_0\leftarrow$ string with $x_i\in\{0,1,2,3\}$}
$n\leftarrow\Len{x}$\;
$n_Y\leftarrow\,$number of $Y$ matrices in $x$\;
$j\leftarrow\Range{$0, 2^n-1$}$\tcp*[r]{rows}
$k, m\leftarrow\,$empty $2^n$-array\tcp*[r]{columns/entries}
$k(0)\leftarrow y(x_{n-1})\dots y(x_{0})$ in base 10\;
$m(0)\leftarrow(-i)^{n_Y\text{ mod }4}$\;\label{line:const}
\For{$l\in$ \Range{$0, n-1$}}{
    $k(2^l:2^{l+1}-1)\leftarrow k(0:2^l-1)+(-1)^{y(x_l)}2^l$\;
    \uIf(\hfill\tcp*[f]{$\epsilon_l=1$}){$x_l\in\{0, 1\}$}{
        $m(2^l:2^{l+1}-1)\leftarrow m(0:2^l-1)$
    }
    \uElse(\hfill\tcp*[f]{$\epsilon_l=-1$}){
        $m(2^l:2^{l+1}-1)\leftarrow -m(0:2^l-1)$\;
    }
}
\Output{$P(x)$ as a sparse matrix stacking $(j, k, m)$}
\caption{\small\texttt{PC}: compose $n$ Pauli matrices}\label{alg:code}
\end{algorithm}
\end{minipage}\vspace{-1.8mm}
\begin{minipage}[t]{\columnwidth}
\begin{algorithm}[H]
\DontPrintSemicolon
\SetAlgoNoLine
\SetKwInOut{Input}{input}
\SetKwInOut{Output}{output}
\SetKw{Continue}{continue}
\SetKwFunction{Len}{len}
\SetKwFunction{Range}{range}
\Input{$x_{n-1}x_{n-2}\dots x_0\leftarrow$ string with $x_i\in\{0,3\}$}
$n\leftarrow\Len{x}$\;
$j, k\leftarrow\Range{$0, 2^n-1$}$\tcp*[r]{rows/columns}
$m\leftarrow\,$empty $2^n$-array \hfill\tcp*[r]{entries}
$m(0)\leftarrow 1$\;\label{line:const_diag}
\For{$l\in$ \Range{$0, n-1$}}{
    \uIf(\hfill\tcp*[f]{$\epsilon_l=1$}){$x_l=0$}{
        $m(2^l:2^{l+1}-1)\leftarrow m(0:2^l-1)$
    }
    \uElse(\hfill\tcp*[f]{$\epsilon_l=-1$}){
        $m(2^l:2^{l+1}-1)\leftarrow -m(0:2^l-1)$\;
    }
}
\Output{$P(x)$ as a sparse matrix stacking $(j, k, m)$}
\caption{\small\texttt{PDC}: compose $n$ diagonal Pauli matrices}\label{alg:code_diag}
\end{algorithm}%
 \end{minipage}
\end{figure}%

The PC algorithm is able to circumvent the calculation of a significant amount of operations. When generic Kronecker product routines (see~\applabel{sec:methods}) are used for the same task, the amount of multiplications needed for computing a Pauli string is $\mathcal{O}[n2^{2n}]$ and $\mathcal{O}[n2^n]$ for dense and sparse matrices, respectively. In contrast, the PC algorithm, considering the worst-case scenarios, needs
\begin{itemize}
    \item $\{I, Z\}^{\otimes n}$: $\mathcal{O}[2^n]$ changes of sign.
    \item Otherwise: $\mathcal{O}[2^n]$ sums and $\mathcal{O}[2^n]$ changes of sign.
\end{itemize}
In all cases our algorithm can significantly outperform those that are not specifically designed for Pauli matrices.

On top of that, this method is also advantageous for computing weighted Pauli strings. Following~\eqref{eq:first_entry}, $W\coloneqq\omega P$, with arbitrary $\omega$, can be computed by defining $W_{0, k(0)}= \omega(-i)^{n_Y\text{ mod }4}$ which avoids having to do any extra multiplication. This change is reflected in~\alglabel{alg:code} by changing line~\ref*{line:const} to $m(0)\gets \omega(-i)^{n_Y\text{ mod }4}$ and line~\ref*{line:const_diag} to $m(0)\gets\omega$ in~\alglabel{alg:code_diag}. This is specially important as it can be used to compute Hamiltonians written as a weighted sum of Pauli strings, where $H=\sum_x \omega_xP(x)$.

\section{Benchmarking}\label{sec:bench}

In this section we analyse the improvement that the PC strategy introduces against other known algorithms labelled as \texttt{Naive} (regular Kronecker product), Algorithm 993 (\texttt{Alg993})~\cite{Fackler_2019}, \texttt{Mixed} and \texttt{Tree}~\cite{topics_in_matrix_analysis, kron_efficiently}. Further details can be found in~\applabel{sec:methods}.
We benchmark these algorithms using MATLAB~\cite{MATLAB:R2022a} as it is proficient at operating with matrices (it incorporates optimized routines of the well-known BLAS and LAPACK libraries~\cite{Lawson_79, lapack}). The PC avoids matrix operations and thus it would not be ideal to implement it using MATLAB. Instead, we use Python~\cite{python39} since many quantum computing libraries are written in this language~\cite{qiskit,cirq,openfermion,pennylane}. See~\tablabel{tab:specs} for a full description of the computational resources used.%
\begin{table}[!tb]
\caption{Computer specifications.}\label{tab:specs}%
\begin{ruledtabular}
\begin{tabular}{rlrl}
        Processor & \multicolumn{3}{l}{Intel Core i7-11850H ($16\times\SI{2.50}{GHz}$)} \\
        RAM & \multicolumn{3}{l}{$\SI{32.0}{GB}$ (DDR4)} \\
        OS & \multicolumn{3}{l}{Ubuntu 22.04.1 LTS ($\times$64)} \\
        MATLAB~\cite{MATLAB:R2022a} & \multicolumn{3}{l}{9.12.0.1884302 (R2022a)} \\
        Python~\cite{python39} & \multicolumn{3}{l}{3.9.12} \\
        NumPy~\cite{numpy} & 1.23.2 & SciPy~\cite{2020SciPy-NMeth} & 1.9.0 \\
        Qiskit~\cite{qiskit} & 0.38.0 & PennyLane~\cite{pennylane} & 0.23.1 \\
\end{tabular}
\end{ruledtabular}
\end{table}%
\begin{figure}[!t]
 \centering
 \includegraphics{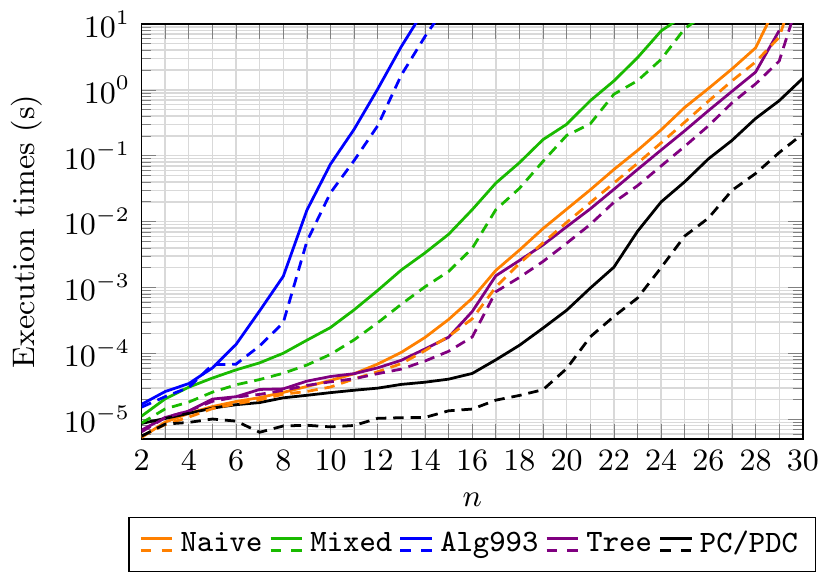}%
\caption{(Color online) Execution times for computing general (solid line) and diagonal (dashed) $n$-Pauli strings using different methods.}\label{fig:times}
\end{figure}%

Concerning memory needs, with this algorithm only $2^n$ nonzero elements out of $2^{2n}$ are stored. This is exactly the same as using sparse matrices, thus, no major improvement is to be expected. As for the computational time, we compare how different algorithms behave as the length $n$ of the Pauli string increases. In~\figlabel{fig:times} execution times for general and diagonal Pauli strings are shown. For the PC methods, we use the \texttt{PC} routine (\alglabel{alg:code}) for the general case and the \texttt{PDC} routine (\alglabel{alg:code_diag}) for the diagonal one. In accordance to our theoretical analysis, the PC algorithm proves to be the best performing routine.

On a more technical note, when using the PC routine, matrices with complex values ($n_Y$ odd) take twice as much time as real valued ones ($n_Y$ even). Consequently, we compute their execution times separately and then average them. Moreover, it is convenient to choose when to use \texttt{PC} or \texttt{PDC} as the latter can be up to 10 times faster.

\section{Real use cases of the PauliComposer algorithm}\label{sec:uses}
The PC algorithm can be used to perform useful calculations in physics. In this section, the Pauli basis decomposition of a Hamiltonian and the construction of a Hamiltonian as a sum of weighted Pauli strings are discussed in detail. Another worth mentioning scenario is the digital implementation of the complex exponential of a Pauli string, i.e. $e^{-i \theta P(x)}=\cos(\theta)I -i \sin(\theta)P(x)$.

\emph{Pauli basis decomposition of a Hamiltonian}.---The decomposition of a Hamiltonian written as a $2^n\times2^n$ matrix into the Pauli basis is a common problem in quantum computing. Given a general Hamiltonian $H$, this decomposition can be written as $H=\sum_x \omega_x P(x)$ with $x = x_{n-1}\dots x_0$ and $P(x)$ as in \eqref{eq:composition}. The coefficients $\omega_x$ are obtained from the orthogonal projection as
\begin{equation}\label{eq:coeffs}
    \omega_x = \frac{1}{2^n} \trace{P(x) H} =  \frac{1}{2^n}\sum_{j=0}^{2^n-1}P_{j,k(j)}(x) H_{k(j),j}.
 \end{equation}
 Following the discussion in \seclabel{sec:desc}, the double sum collapses to a single one in~\eqref{eq:coeffs} since there is only one nonzero element per row and column. Each of these weights can be computed independently, which allows for a parallel implementation. Additionally, in some special cases, it can be known in advance if some $\omega_x$ will vanish:
\begin{itemize}
\item If $H$ is symmetric, strings with an odd number of $Y$ matrices can be avoided ($2^{n-1}(2^n+1)$ terms).
\item If $H$ is diagonal, only strings composed by $I$ and $Z$ will contribute ($2^n$ terms).
\end{itemize}

\begin{table*}[!t]
\centering
\caption{Execution times (in seconds) for decomposing an arbitrary $2^n\times2^n$ matrix. Here, \texttt{PC} and \texttt{PDC} calculations were made computing weights sequentially and in parallel.}\label{tab:decomposer_times}
\begin{ruledtabular}
\begin{tabular}{@{}lddddddddd@{}}
\multicolumn{1}{c}{$n$} & \multicolumn{1}{c}{2} & \multicolumn{1}{c}{3} & \multicolumn{1}{c}{4} & \multicolumn{1}{c}{5} & \multicolumn{1}{c}{6} & \multicolumn{1}{c}{7} & \multicolumn{1}{c}{8} & \multicolumn{1}{c}{9} & \multicolumn{1}{c}{10} \\ \midrule
\multicolumn{10}{c}{Non-hermitian matrix $H_\text{NH}$} \\ \midrule
\texttt{PC} (sequential) & 0.0005 & 0.0021 & 0.012 & 0.078 & 0.55 & 4.06 & 31.2 & 254 & 2008 \\
\texttt{PC} (parallel) & 0.094 & 0.093 & 0.11 & 0.15 & 0.38 & 2.10 & 13.5 & 94.3 & 719 \\
Qiskit & 0.0015 & 0.0050 & 0.020 & 0.14 & 1.16 & 8.78 & 92.38 & 1398 & 26938 \\ \midrule
\multicolumn{10}{c}{Hermitian matrix $H_\text{H}$} \\ \midrule
\texttt{PC} (sequential) & 0.0004 & 0.0021 & 0.012 & 0.078 & 0.56 & 4.24 & 32.86 & 261 & 2007 \\
\texttt{PC} (parallel) & 0.068 & 0.070 & 0.079 & 0.12 & 0.33 & 1.99 & 13.02 & 96.5 & 647 \\
Qiskit & 0.0010 & 0.0035 & 0.018 & 0.10 & 1.47 & 12.02 & 108 & 1295 & 26848 \\
PennyLane & 0.0013 & 0.0060 & 0.030 & 0.15 & 2.23 & 10.66 & 97.6 & 2019 & 35014 \\ \midrule
\multicolumn{10}{c}{Symmetric matrix $H_\text{S}$} \\
\midrule
\texttt{PC} (sequential) & 0.0003 & 0.0010 & 0.0058 & 0.036 & 0.24 & 1.78 & 14.05 & 108 & 794 \\
\texttt{PC} (parallel) & 0.059 & 0.059 & 0.061 & 0.078 & 0.13 & 0.48 & 2.75 & 20.1 & 140 \\
Qiskit & 0.0010 & 0.0036 & 0.018 & 0.10 & 1.45 & 11.07 & 105 & 1320 & 26399 \\
PennyLane & 0.0011 & 0.0054 & 0.027 & 0.13 & 1.36 & 9.22 & 91.52 & 1477 & 31583 \\ \midrule
\multicolumn{10}{c}{Diagonal matrix $H_\text{D}$} \\
\midrule
\texttt{PDC} (sequential) & 0.0001 & 0.0002 & 0.0006 & 0.0018 & 0.0068 & 0.025 & 0.094 & 0.37 & 1.49 \\
\texttt{PDC} (parallel) & 0.055 & 0.057 & 0.059 & 0.060 & 0.060 & 0.064 & 0.078 & 0.12 & 0.35 \\
Qiskit & 0.0010 & 0.0035 & 0.018 & 0.10 & 1.46 & 11.0 & 103 & 1270 & 25977 \\
PennyLane & 0.0010 & 0.0047 & 0.023 & 0.11 & 1.20 & 8.29 & 86.2 & 1370 & 30941 \\
\end{tabular}
\end{ruledtabular}
\end{table*}%
\begin{figure*}[!tb]
 \centering
 \includegraphics{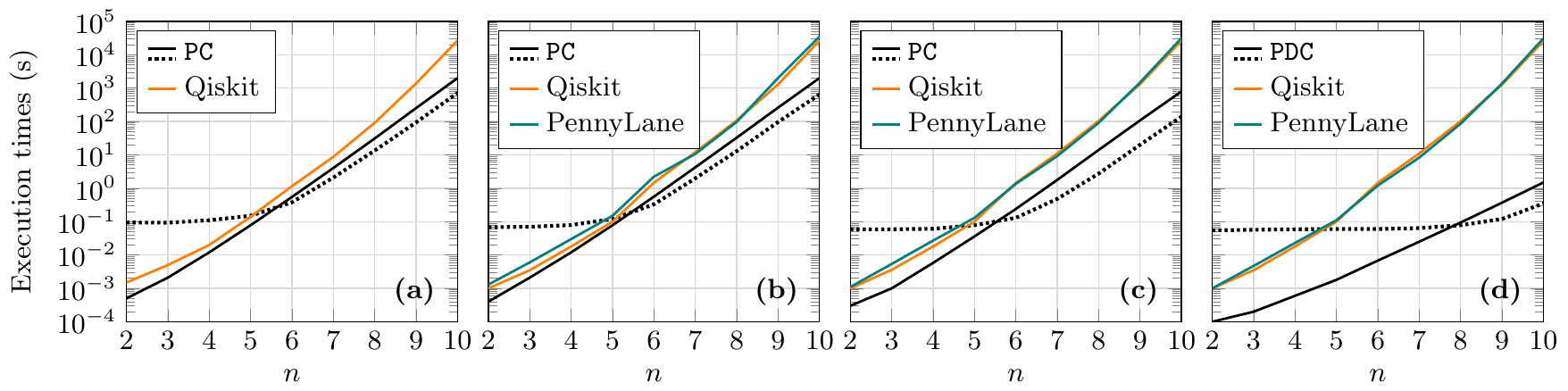}%
\caption{(Color online) Execution times for decomposing $2^n\times 2^n$ \textbf{(a)} non-hermitian $H_\text{NH}$, \textbf{(b)} hermitian $H_\text{H}$, \textbf{(c)} symmetric $H_\text{S}$ and \textbf{(d)} diagonal $H_\text{D}$ matrices with different methods. For \texttt{PC} and \texttt{PDC}, solid (dotted) line depicts sequential (parallelized) decomposition. See~\tablabel{tab:decomposer_times}. As expected, notice that the larger $n$, the higher impact of parallelization.}\label{fig:times_decomposer}
\end{figure*}%

The operations made by \emph{PauliDecomposer} (PD) are
\begin{itemize}
    \item If $H$ is diagonal ($\mathcal{O}[2^n]$ strings): $\mathcal{O}[2^{2n}]$ operations.
    \item Otherwise ($\mathcal{O}[2^{2n}]$ strings): $\mathcal{O}[2^{3n}]$ operations.
\end{itemize}%
This PD algorithm checks if the input matrix satisfies one of the aforementioned cases and computes the coefficients using the PC routine and~\eqref{eq:coeffs}, discarding all vanishing Pauli strings. This workflow considerably enhances our results, especially for diagonal matrices.

In~\tablabel{tab:decomposer_times} and~\figlabel{fig:times_decomposer}, we tested the most extended methods for decomposing matrices into weighted sums of Pauli strings against PD, using Python~\cite{python39} to compare their performance. In particular, we used the \texttt{SparsePauliOp} class from Qiskit~\cite{qiskit} and the \texttt{decompose\_hamiltonian} function from PennyLane~\cite{pennylane} (only works with hermitian Hamiltonians). To the best of authors' knowledge, both routines are based on \texttt{Naive} approach without inspecting the input matrix nature before proceeding.

Four types of random $2^n\times2^n$ matrices were generated, namely non-hermitian $H_\text{NH}$, hermitian $H_\text{H}$, symmetric $H_\text{S}$ and diagonal $H_\text{D}$ matrices. The PD vastly outperforms Qiskit and PennyLane routines, specially for the symmetric and diagonal cases.

\emph{Building of a Hamiltonian as a sum of weighted Pauli strings}.---Many Hamiltonians are written in terms of weighted Pauli strings. Our method can compute weighted Pauli strings directly without extra computations. In~\figlabel{fig:string_constant} we show a performance comparison of the presented methods for computing Hamiltonians written as sums of weighted Pauli strings. The Hamiltonian used is similar to the one proposed in~\cite{BPP2},
\begin{equation}\label{eq:ham_bpp}
    H = \sum_{i=0}^{n-1}\alpha_i\sigma^i_3 + \sum_{i<j}^{n-1}\beta_{ij}\sigma^i_3\sigma^j_3,
\end{equation}
being the corresponding weights $\vec{\alpha}$ and $\vec{\beta}$ arbitrary and $\sigma^i_3$ as defined in~\eqref{eq:tensor_product_decompose_backward}. This Hamiltonian is computed using~\alglabel{alg:ising_ham}, which uses the \texttt{PDC} routine (see \alglabel{alg:code_diag}) with two inputs: the string $x\in\{0,3\}^n$ to compute and the weights to consider. In the PDC case, we use two strategies: compute each weighted term of~\eqref{eq:ham_bpp} directly and compute each Pauli string and then multiply it by its corresponding weight (solid and dashed lines in~\figlabel{fig:string_constant}, respectively). This is done by changing lines~\ref*{line:str1} to $H\gets H+\alpha_i\texttt{PDC(}str_1\texttt{)}$ and~\ref*{line:str2} to $H\gets H+\beta_{ij}\texttt{PDC(}str_2\texttt{)}$ in~\alglabel{alg:ising_ham} for the second one. There is no significant difference between both methods.

\begin{algorithm}[!t]
\DontPrintSemicolon
\SetAlgoNoLine
\SetKwInOut{Input}{input}
\SetKwInOut{Output}{output}
\SetKwFunction{Len}{len}
\SetKwFunction{Range}{range}
\SetKwFunction{Copy}{copy}
\SetKwFunction{PDC}{PDC}
\SetKwFunction{In}{in}
\Input{$\vec{\alpha},\vec{\beta}\gets\,$lists of weights}
$n\gets\Len{$\vec{\alpha}$}$\;
$H\gets\,2^n\times2^n$ sparse matrix of zeros\;
\For{$i\in$ \Range{$0, n-1$}}{
    $str_1\gets\,$string of $n$ zeros\tcp*[r]{$n$ identities}
    $str_1(i)\gets3$\tcp*[r]{$Z$ in the $i$-th position}
    $H\gets H+\PDC{$str_1, \alpha_i$}$\;\label{line:str1}
    \For{$j\in$ \Range{$i+1, n-1$}}{
        $str_2\gets\Copy{$str_1$}$\;
        $str_2(j)\gets3$\tcp*[r]{$Z$ in the $j$-th position}
        $H\gets H+\PDC{$str_2, \beta_{ij}$}$\;\label{line:str2}
    }
}
\Output{Hamiltonian $H$ as a sparse matrix}
\caption{\small Ising model Hamiltonian computation}\label{alg:ising_ham}
\end{algorithm}
\begin{figure}[!tb]
 \centering
 \includegraphics{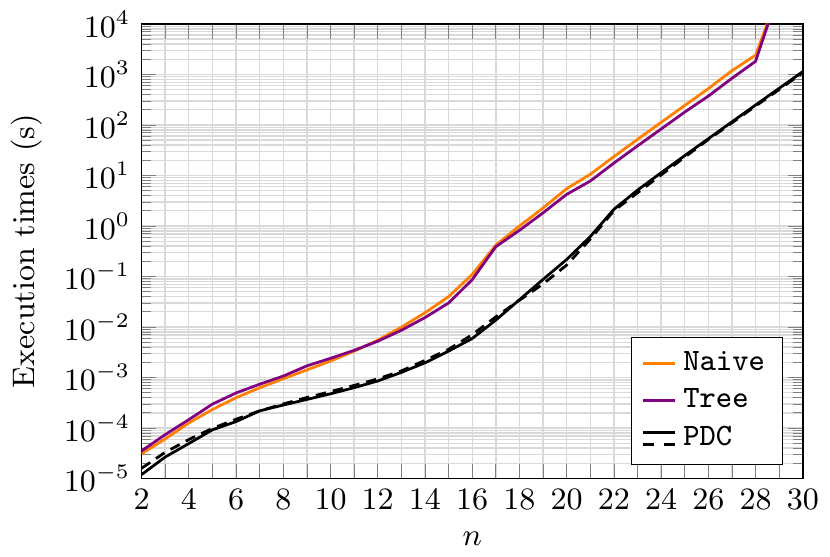}%
\caption{(Color online) Execution times for computing~\eqref{eq:ham_bpp} using~\alglabel{alg:ising_ham} (solid line) and computing previously the Pauli string and multiply it by its corresponding weight (dashed).}\label{fig:string_constant}
\end{figure}%

\section{Conclusions}\label{sec:conclusions}

The fast and reliable computation of tensor products of Pauli matrices is crucial in the field of quantum mechanics and, in particular, of quantum computing. In this article we propose a novel algorithm with proven theoretical and experimental enhancements over similar methods of this key yet computationally tedious task. This is achieved by taking advantage of the properties of Pauli matrices and the tensor product definition, which implies that one can avoid trivial operations such as multiplying constants by one and waste time computing elements with value zero that could be known in advance.

Concerning memory resources, it is convenient to store the obtained results as sparse matrices since only $2^n$ out of $2^{2n}$ entries will not be zero for a Pauli string of length $n$, i.e. the density of the resultant matrix will be $2^{-n}$ (see~\applabel{sec:zeros}).

Our benchmark tests suggest that the PauliComposer algorithm and its variants can achieve a remarkable acceleration when compared to the most well-known methods for the same purpose both for single Pauli strings and real use cases. In particular, the most considerable outperformance can be seen in~\tablabel{tab:decomposer_times} for the symmetric and diagonal matrix decomposition over the Pauli basis.

Finally, its simple implementation (\alglabel{alg:code}-\ref{alg:code_diag}) can potentially allow to integrate the PC routines into quantum simulation packages to enhance inner calculations.

\begin{acknowledgments}
We thank Javier Mas Solé, Yue Ban and Mikel García de Andoin for the helpful discussions. This research is funded by the project “BRTA QUANTUM: Hacia una especialización armonizada en tecnologías cuánticas en BRTA” (expedient no. KK-2022/00041). The work of JSS has received support from Xunta de Galicia (Centro singular de investigación de Galicia accreditation 2019-2022) by European Union ERDF, from the Spanish Research State Agency (grant PID2020-114157GB-100) and from MICIN with funding from the European Union NextGenerationEU (PRTR-C17.I1) and the Galician Regional Government with own funding through the “Planes Complementarios de I+D+I con las Comunidades Autónomas” in Quantum Communication.
\end{acknowledgments}
\noindent\textbf{Data and code availability statement.} The data used in the current study is available upon reasonable request from the corresponding authors. The code used can be found at \url{https://github.com/sebastianvromero/PauliComposer}.

\appendix
\section{Some proofs regarding Pauli strings}\label{sec:zeros}
In this section we prove two key properties of Pauli strings on which our algorithm is based.
\begin{theorem}\label{th:zeros}
 A Pauli string $P(x)$ of length $n$ given by~\eqref{eq:composition} has only $2^n$ nonzero entries.
\end{theorem}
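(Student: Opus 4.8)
The plan is to reduce the count to a single multiplicative property of the Kronecker product, and then conclude by induction on $n$. First I would record the elementary base observation: each single-factor matrix $\sigma_a$ with $a\in\{0,1,2,3\}$ is a $2\times2$ matrix having exactly one nonzero entry in every row and in every column, i.e. each Pauli matrix (together with the identity) is a generalized permutation matrix and therefore carries exactly two nonzero entries. This is verified by direct inspection of $I$, $X$, $Y$ and $Z$, noting that the $\pm i$ entries of $Y$ are of course nonzero.

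The core of the argument is how the number of nonzero entries behaves under $\otimes$. Writing a row index of $A\otimes B$ in mixed-radix form $j=j_1 d_B + j_2$ and a column index as $k=k_1 d_B + k_2$, with $d_B$ the dimension of $B$ and $0\le j_2,k_2<d_B$, the defining relation $(A\otimes B)_{j,k}=A_{j_1,k_1}\,B_{j_2,k_2}$ shows that an entry of $A\otimes B$ is nonzero if and only if both corresponding entries of $A$ and of $B$ are nonzero. Since the decomposition $j\mapsto(j_1,j_2)$, $k\mapsto(k_1,k_2)$ is a bijection between entries of $A\otimes B$ and pairs (entry of $A$, entry of $B$), the number of nonzero entries of $A\otimes B$ equals the \emph{product} of the numbers of nonzero entries of $A$ and of $B$. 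With this, the theorem follows immediately: taking $P(x)=\sigma_{x_{n-1}}\otimes P(x')$ with $x'=x_{n-2}\dots x_0$ in \eqref{eq:composition}, the multiplicativity and the inductive hypothesis give a nonzero count of $2\cdot 2^{\,n-1}=2^n$, with base case $n=1$ supplied by the observation above.

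I do not anticipate a genuine obstacle here, since the content is entirely elementary; the only point deserving care is precisely the index arithmetic in the Kronecker step. One must check that the mixed-radix decomposition really is a bijection respecting the block structure, so that ``nonzeros multiply'' is an exact equality rather than merely an upper bound. As a useful byproduct, the same bookkeeping upgrades the statement: because each factor has a single nonzero per row, so does $P(x)$, which both furnishes the well-defined column map $k(j)$ used throughout and gives an alternative proof of the total $2^n$ by counting one nonzero in each of the $2^n$ rows.
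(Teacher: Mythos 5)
Your proof is correct, but it takes a genuinely different route from the paper's. The paper counts the \emph{zeros}: using the nested block structure of the Kronecker product (the scheme in \figlabel{fig:zeros}), it sums the areas of the zero blocks produced at each of the $n$ levels, obtaining $n_0(n)=\sum_{k=n}^{2n-1}2^k=2^n(2^n-1)$ as in \eqref{eq:zeros_proof_2}, so that $2^{2n}-n_0(n)=2^n$ entries remain nonzero, and then re-verifies this formula by induction. You instead count the \emph{nonzeros} directly, via a clean general lemma: since $(A\otimes B)_{j,k}=A_{j_1,k_1}B_{j_2,k_2}$ and the mixed-radix index map is a bijection between entries of $A\otimes B$ and pairs of entries of $A$ and $B$, the nonzero count is multiplicative under $\otimes$ (the product vanishes if and only if a factor does, as $\mathbb{C}$ has no zero divisors); induction with each $\sigma_a$ contributing $2$ nonzeros then gives $2\cdot2^{n-1}=2^n$. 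Your route buys two things: the lemma holds for arbitrary matrices rather than being tied to the $2\times2$ Pauli structure, and, as you note, the same bookkeeping shows that $P(x)$ inherits exactly one nonzero per row and per column from its factors --- which is precisely the corollary that follows this proposition in \applabel{sec:zeros}, and which the paper must prove separately through a unitarity argument ($|\!\det P(x)|=1$ plus counting); your approach makes that corollary a free and arguably more direct byproduct, and it also furnishes the column map $k(j)$ used throughout \seclabel{sec:desc}. What the paper's block-counting picture buys in exchange is an explicit description of \emph{where} the zero blocks sit at each level of the recursion, which is exactly the structure the iterative construction exploits in \eqref{eq:columns}.
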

\begin{proof}
With the help of~\figlabel{fig:zeros}, we can compute the number of zeros in the resulting matrix as
\begin{equation}\label{eq:zeros_proof_2}
 \begin{aligned}
  n_0(n) &= \textcolor{red}{2\left(2^{n-1}\times2^{n-1}\right)} + \textcolor{green!50!black}{4\left(2^{n-2}\times2^{n-2}\right)} \\
   &+ \textcolor{blue}{8\left(2^{n-3}\times2^{n-3}\right)} + \dots + 2^n(1\times1) \\
   & = \sum_{k=n}^{2n-1}2^k = 2^n\left(2^n-1\right).
 \end{aligned}
\end{equation}
In other words, $P(x)$ will have only $2^n$ nonzero terms. We can prove~\eqref{eq:zeros_proof_2} by induction easily: since $n_0(n=1)$ is true, if we assume that $n_0(n)$ holds we can see that
\begin{equation}
 \begin{aligned}
  n_0(n+1) &= 2\cdot 2^n(2^n-1) + 2\cdot 2^{2n} = 2^{n+1}\left(2^{n+1}-1\right)
 \end{aligned}
\end{equation}
also holds true.
\end{proof}%

\begin{figure}[!bt]
\[
  \bigotimes_{i=0}^{n-1}\sigma_{x_{n-i-1}} =
  \begin{bmatrix}
    \begin{array}{cc}
    \textcolor{green!50!black}{0} & \begin{array}{cc}\cdots&\textcolor{blue}{0}\\\textcolor{blue}{0}&\cdots\end{array} \\
    \begin{array}{cc}\cdots&\textcolor{blue}{0}\\\textcolor{blue}{0}&\cdots\end{array} & \textcolor{green!50!black}{0} \\
    \end{array} & \textcolor{red}{0} \\
    \textcolor{red}{0} & \begin{array}{cc}
    \textcolor{green!50!black}{0} & \begin{array}{cc}\cdots&\textcolor{blue}{0}\\\textcolor{blue}{0}&\cdots\end{array} \\
    \begin{array}{cc}\cdots&\textcolor{blue}{0}\\\textcolor{blue}{0}&\cdots\end{array} & \textcolor{green!50!black}{0} \\
    \end{array} \\
  \end{bmatrix}
\]
\caption{(Color online) Scheme for computing the number of zeros of an arbitrary composition of $n$ Pauli matrices.}\label{fig:zeros}
\end{figure}

\begin{corollary}
 A Pauli string $P(x)$ of length $n$ given by~\eqref{eq:composition} has only one nonzero entry per row and column.
\end{corollary}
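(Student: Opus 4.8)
The plan is to bootstrap from the count already proven in Proposition~\ref{th:zeros}, converting the global tally of $2^n$ nonzero entries into the sharper local statement by a pigeonhole argument, after pinning down two structural facts about $P(x)$. First I would observe that $P(x)$ is unitary: each factor $\sigma_{x_i}\in\{I,X,Y,Z\}$ is unitary, and a Kronecker product of unitaries is unitary, so $P(x)$ is invertible and hence has no identically zero row. Thus every one of the $2^n$ rows contains at least one nonzero entry.

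Next I would combine this with Proposition~\ref{th:zeros}. The matrix carries exactly $2^n$ nonzero entries distributed among its $2^n$ rows, and each row already holds at least one; since the number of rows equals the total number of nonzero entries, no row can hold two without forcing another to hold none. Hence each row holds exactly one nonzero entry, which is precisely the row half of the claim.

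For the column half I would exploit a second structural fact: each $\sigma_{x_i}$ is Hermitian, so $P(x)$ is Hermitian as a Kronecker product of Hermitian matrices. Consequently its zero pattern is symmetric under transposition, $P_{j,k}(x)=0 \iff P_{k,j}(x)=0$, and the ``exactly one per row'' conclusion transfers verbatim to columns. As a self-contained alternative bypassing Proposition~\ref{th:zeros}, one can induct on $n$ using the Kronecker block structure: a single Pauli matrix is monomial by inspection, and writing $P(x)=\sigma_{x_{n-1}}\otimes P(x_{n-2}\dots x_0)$ together with the entry rule $(A\otimes B)_{i_A d_B+i_B,\,j_A d_B+j_B}=A_{i_A,j_A}B_{i_B,j_B}$ shows an entry is nonzero iff both factors are, so the unique nonzero column of each factor-row yields a unique nonzero column of the product-row, and symmetrically for columns.

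I anticipate no real obstacle, as this is essentially a corollary of the preceding count; the only step deserving a careful sentence is the pigeonhole itself, namely that ``at least one nonzero per row'' plus a total of exactly $2^n$ nonzeros over exactly $2^n$ rows forces ``exactly one per row''. The Hermiticity shortcut then saves repeating the entire argument for columns, and the optional inductive route is routine modulo the index bookkeeping in the Kronecker entry formula.
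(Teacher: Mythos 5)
Your proof is correct and follows essentially the same route as the paper: unitarity of $P(x)$ (tensor product of unitaries) rules out zero rows and columns, and combining the count of $2^n$ nonzero entries from Proposition~\ref{th:zeros} with a pigeonhole argument forces exactly one nonzero entry per row and per column. The only differences are minor: the paper disposes of the column half with the same nonzero-determinant/pigeonhole observation instead of your Hermiticity symmetry (both are valid), and your optional self-contained induction via the Kronecker entry formula is sound but unnecessary given the Proposition.
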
%
\begin{proof}
 Since the tensor product of unitary matrices is also unitary, then $|\!\det P(x)|=1$. From~\theolabel{th:zeros}, only $2^n$ entries of the resulting $2^n\times2^n$ matrix are nonzero. So the logical conclusion to be drawn is that the unique way to locate them without having a row and a column full of zeros, thus returning a zero determinant, is that each row and column must have only one nonzero entry.
\end{proof}%

\section{Standard methods for computing tensor products}\label{sec:methods}

In this appendix we briefly review the well established algorithms that were used in the benchmark~\cite{Fackler_2019, topics_in_matrix_analysis, kron_efficiently}. First, one can consider what we call the \texttt{Naive} algorithm, which consists on performing the calculations directly. It is clearly highly inefficient as it scales in the number of operations as $\mathcal{O}[n2^n]$ for sparse Pauli matrices.
Second, the \texttt{Mixed} algorithm uses the mixed-product property
\begin{equation}
    \bigotimes_{i=0}^{n-1}\sigma_{x_{n-i-1}}=\prod_{i=0}^{n-1}\sigma^i_{x_{n-i-1}},
\end{equation}
with
\begin{equation}\label{eq:tensor_product_decompose_backward}
    \sigma^i_{x_i}\coloneqq
    \begin{cases}
    I^{\otimes n-1}\otimes\sigma_{x_0} & \text{if }i=0 \\
    I^{\otimes n-i-1}\otimes\sigma_{x_i}\otimes I^{\otimes i} & \text{if }0<i<n-1 \\
    \sigma_{x_{n-1}}\otimes I^{\otimes n-1} & \text{if }i=n-1
    \end{cases},
\end{equation}
to simplify the calculation into a simple product of block diagonal matrices. Based on this procedure, \texttt{Alg993} is presented in~\cite{Fackler_2019}. It can be shown that this method performs over $\mathcal{O}[n2^n]$ operations. Besides that, as~\figlabel{fig:times} suggests, the fact that it requires to transpose and reshape several matrices has a non-negligible effect that fatally increases its computation time. Finally, the \texttt{Tree} routine starts storing pairs of tensor products as
\begin{equation}
    \begin{aligned}
        \displaystyle\left\{\sigma_{x_{n-2i-1}}\otimes\sigma_{x_{n-2i-2}}\right\}^{n/2-1}_{i=0}& &&\text{if }n\text{ even} \\
        \displaystyle\left\{\sigma_{x_{n-1}}\right\} \cup \left\{\sigma_{x_{n-2i-1}}\otimes\sigma_{x_{n-2i-2}}\right\}^{\lfloor n/2\rfloor}_{i=0}& &&\text{if }n\text{ odd}
    \end{aligned},
\end{equation}
and proceeds with the resultant matrices following the same logic, which allows to compute~\eqref{eq:composition} by iteratively grouping its terms by pairs. For better results, this method can be parallelized.

\bibliography{bibfile}

\end{document}